\newcommand{\blind}{0}
\newcommand*{\addFileDependency}[1]{
  \typeout{(#1)}
  \@addtofilelist{#1}
  \IfFileExists{#1}{}{\typeout{No file #1.}}
}
\newcommand{\defeq}{\vcentcolon=}
  \let\code=\texttt
\DeclareMathOperator*{\argmax}{argmax}
\newsavebox\MBox
\useunder{\uline}{\ul}{}
\newtheorem{remark}{Remark}
\begin{document}

\def\spacingset#1{\renewcommand{\baselinestretch}%
{#1}\small\normalsize} \spacingset{1}

%%%%%%%%%%%%%%%%%%%%%%%%%%%%%%%%%%%%%%%%%%%%%%%%%%%%%%%%%%%%%%%%%%%%%%%%%%%%%%

\if0\blind
{
  \title{\bf A Change-Point Approach to Estimating the Proportion of False Null Hypotheses in Multiple Testing}
  \author{Anica Kostic\thanks{Author for correspondence. [E-mail: a.kostic@lse.ac.uk, Address: Department of Statistics, London School of Economics and Political Science, Columbia House, Houghton Street, London, WC2A 2AE, UK]}\hspace{.2cm}
    and \\
    Piotr Fryzlewicz \\
    Department of Statistics, London School of Economics and Political Science}
  \date{}
  \maketitle
} \fi

\if1\blind
{
  \bigskip
  \bigskip
  \bigskip
  \begin{center}
    {\LARGE\bf Title}
\end{center}
  \medskip
} \fi

\bigskip
\begin{abstract}
For estimating the proportion of false null hypotheses in multiple testing, a family of estimators by \cite{storey2002direct} is widely used in the applied and statistical literature, with many methods suggested for selecting the parameter $\lambda$. Inspired by change-point concepts, our new approach to the selection of $\lambda$ first approximates the $p$-value plot with a piecewise linear function with a single change-point and then selects the $p$-value at the change-point location as $\lambda$.
We provide asymptotic theory for our estimator, relying on the theory of quantile processes. We propose a method that adapts to the unknown sparsity of the false-null $p$-values by tuning the parameter of the change-point estimation step. Additionally, we propose an application in the change-point literature and illustrate it using high-dimensional copy number variation (CNV) data.
\end{abstract}

\noindent%
{\it Keywords:}  Multiple testing; change-point detection; p-values
\vfill

\newpage
\spacingset{2} % DON'T change the spacing!

\section{Introduction} \label{sec:intro}
 
Of interest to us in this work is the problem of estimating the proportion of false null hypotheses when many are tested simultaneously. Under the standard assumption of uniformity of true null $p$-values, $p$-value distribution is modeled as a mixture, with cumulative distribution function (CDF)
\begin{equation} \label{eq:mixture_model}
    F(x)=\pi_1F_1(x)+\pi_0x, \quad x\in [0,1],
\end{equation}
where $\pi_1$ is the unknown proportion of false null hypotheses, $\pi_0=1-\pi_1$ and $F_1$ is the CDF of the $p$-values under the alternative \citep{storey2002direct, Meinshausen2006, patra2016estimation}. 

% one sentence why of interest one with the references (3 sentences)
% I tried to cut down this part but didn't manage to cut it too much
Estimating the false null proportion $\pi_1$ comes down to estimating the proportion parameter of a two-component mixture distribution, with one component being the known distribution under the null. The proportion parameter quantifies the overall magnitude of significant deviations from baseline non-significant behavior, making it independently valuable. Specifically, this measure finds applications in astronomy and astrophysics \citep{Meinshausen2006, patra2016estimation, swanepoel1999limiting}. Furthermore, this measure is of interest in the classification literature, when only positive and unlabeled examples are available \citep{blanchard10a, Jain2016EstimatingTC, Jain2017}.
In the multiple testing literature, proportion estimators are mainly of indirect interest as they can be used to increase the power of the Benjamini-Hochberg based procedures \citep{benjamini1995controlling, benjamini2000adaptive}.

%compress
Storey's method \citep{storey2002direct}, initially introduced in \cite{schweder1982plots}, is the most common approach to the problem of estimating the proportion parameter.
Assuming that $F_1(x) \approx 1$ for sufficiently large $x\in (0,1)$, the CDF of the $p$-values is approximately linear with slope $\pi_0$. This linearity approximation gives rise to
Storey's family of plug-in proportion estimators, defined as follows:
\begin{equation} \label{eq:ss_pi1_est}
    \hat{\pi}_0(\lambda) = \frac{1- \hat{F}_n(\lambda)}{1-\lambda}, \quad \lambda \in (0,1),
\end{equation}
and $\hat{\pi}_1(\lambda) = 1-  \hat{\pi}_0(\lambda)$.
There are multiple estimators in the literature based on Storey's family, each proposing different tuning parameter value, with no general agreement on the optimal value of $\lambda$ \citep{benjamini2000adaptive, storey2003statistical, Storey2004, Jiang2008}. In general, a smaller $\lambda$ introduces higher bias, while choosing $\lambda$ close to 1 increases the variance of the proportion estimator. 

%maybe move this to the theoretical part?
%Asymptotically, Storey's estimator is guaranteed not to overestimate $\pi_1$ for any $\lambda$. The properties of consistency and asymptotic normality of this estimator are examined in \cite{genovese2004stochastic}.

In this paper, we propose a new data-driven method for tuning Storey's estimator, which we call ``Difference of Slopes'' or DOS. We propose to approximate the plot of sorted $p$-values $(i,p_{(i)})$, $i = 1,\dots, n$, with a piecewise linear function with a single change in slope, using a statistic inspired by the change-point literature. If $1\le \hat{k}\le n$ is the estimated change-point location, we set $\lambda = p_{(\hat{k})}$ in Storey's estimator (\ref{eq:ss_pi1_est}) to obtain the proportion estimator, referred to as ``DOS-Storey''. This value of $\lambda$ aims to separate true from false null $p$-values. Specifically, we aim for it to be the smallest value at which $F_1(\lambda) \approx 1$, marking the onset of the linear part in the quantile function. By choosing such $\lambda$ our goal is to reduce the variance while maintaining low bias in the corresponding Storey's estimator. This method is introduced and described in more detail in Section \ref{sec:DOS_desc}. 
An illustration of the piecewise linear approximation produced by our method in a sparse and less sparse case is shown in Figure \ref{fig:pcws_lin_approx}.

\begin{figure}[H]
\begin{tabular}{cccc}
\subfloat{\includegraphics[width=.5\textwidth]{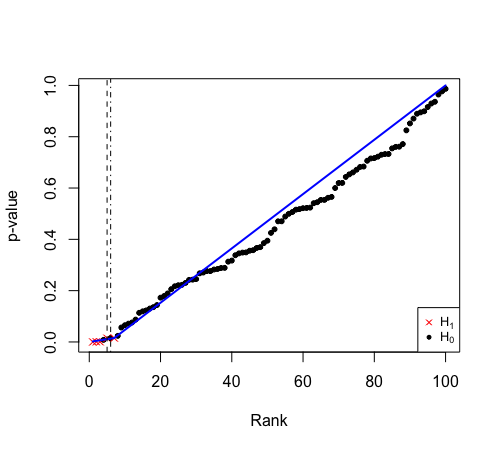}} &
\subfloat{\includegraphics[width=.5\textwidth] 
{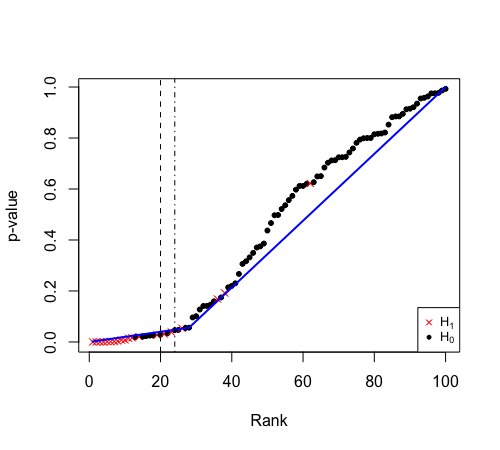}}
\end{tabular}
\caption{The plots illustrate our method in two settings. $p$-values are sorted and shown as dots for true null and crosses for false null hypotheses. The vertical dashed line is at $x = n_1$, the true number of false null hypotheses. The piecewise linear approximation is shown as the solid broken line and the vertical dot-dashed line is at  $x = \hat{n}_1$, the estimated number of false null hypotheses using our method. $p$-values are derived from Gaussian mean testing, $H_0: \mu = 0$ versus $H_1: \mu > 0$. The test statistics $T_i$ have $N(0,1)$ distribution under the null and $N(\mu_1,1)$ under the alternative, with $p$-values calculated as $p_i = 1 - \Phi(T_i)$. Left (sparse case): $n_1=5$, $\mu_1 = 3$, $\hat{n}_1 = 6$. Right (dense case): $n_1 = 20$, $\mu_1 = 2$, $\hat{n}_1 = 24$.}
\label{fig:pcws_lin_approx}
\end{figure}

The proposed change-point detection statistic can be tuned so that the resulting proportion estimator performs better in either highly sparse or moderately sparse cases. In Section \ref{sec:DOS_desc}, we also introduce an adaptive method that works well in both scenarios, eliminating the need for manually selecting the tuning parameter.
%The simulation study shows that the adaptive approach outperforms many of the estimators in the literature, notably achieving a significant reduction in the MSE in highly sparse cases and having the smallest MSE overall. Furthermore, the proposed estimator outperforms the competitors in small samples, which is particularly significant when some of the other estimators are inapplicable. Supplementary material contains various additional simulation results considering potential issues such as dependency and non-uniformity of $p$-values, and adaptive FDR control as the most common application for proportion estimators. Overall, the simulations indicate that out method performs well, whether the goal is precise estimation of the proportion or adaptive FDR control.}

This paper is organized as follows. Section \ref{sec:DOS_desc} describes our proposed method. Theoretical results are presented in Section \ref{sec:theory}, while Section \ref{sec:simulations} contains the simulation study. The real data example can be found in Section \ref{sec:realdata}.
The Supplementary Material contains proofs and additional simulations and discussions.
The code implementing the introduced approach, the simulation study, and the real data example is included in the \code{R} package \code{MTCP}, available at \href{https://github.com/anicakostic/MTCP}{https://github.com/anicakostic/MTCP}.
%which is available on GitHub (link removed for anonymity).

\section{DOS Statistic and the DOS-Storey Estimator}\label{sec:DOS_desc}

Consider the sequence of sorted $p$-values, $p_{(1)}, \ldots, p_{(n)}$, and their representation as points $(i,p_{(i)})$ for $i = 1,\dots, n$, forming a $p$-value plot. The proposed piecewise linear approximation of the $p$-value plot is determined by the change-point location $\hat{k}$. It consists of a line connecting $(0,0)$ and $(\hat{k},p_{\hat{k}})$, and another line connecting $(\hat{k},p_{\hat{k}})$ and $(n+1,1)$.
To calculate the change-point estimate $\hat{k}$, we first define the Difference of Slopes (DOS) sequence as 
\begin{equation}
\label{eq:diff_slopes_seq}
    d_{\alpha}(i) =  \frac{p_{(2i)}-p_{(i)}}{i^\alpha}-\frac{p_{(i)}}{i^\alpha}  = \frac{p_{(2i)}-2p_{(i)}}{i^\alpha}
\end{equation}
for some $\alpha\in [1/2,1]$, $i = 1,\dots, \lfloor n/2 \rfloor$.  The DOS statistic, serving as the change-point estimator, is defined as the index of the maximum term in the DOS sequence:
\begin{equation} \label{eq:def_dos}
   \hat{k}_{\alpha} = \argmax_{nc_n \le i \le n/2} d_\alpha(i).
\end{equation}
The choice of the non-random sequence $c_n$ and the value of $\alpha$ is discussed below.
To obtain the proportion estimate using the DOS statistic, we plug $\lambda = p_{(\hat{k}_{\alpha})}$ into Storey's estimator (\ref{eq:ss_pi1_est}) and get the proposed DOS-Storey($\alpha$) false null proportion estimator:
\begin{equation} \label{eq:dos_prop_est}
    \hat{\pi}_1^{\alpha}= 
    \frac{\hat{k}_\alpha/n-p_{(\hat{k}_{\alpha})}}{1-p_{(\hat{k}_{\alpha})}}.
\end{equation}

To ensure the asymptotic results stated in Section \ref{sec:theory} hold, we exclude the initial $nc_n$ values of $d_\alpha(i)$ from the search for maximum in (\ref{eq:def_dos}). Precisely, the sufficient conditions are
$\frac{nc_n}{\log\log n} \to \infty,
    \frac{\log\log(1/c_n)}{\log\log n} \to C <\infty. $
In Remark 1 in the Supplementary Material, we discuss how different rates for $c_n$ affect the asymptotic results. The practical selection of $c_n$ is addressed in Section \ref{sec:simulations}.

An illustration of the proposed method with $\alpha = 1$ for the two examples from Figure \ref{fig:pcws_lin_approx} is given in Figure \ref{fig:illust_dos_sparse_dense}. The two plots in the right column show the dependency of $\hat{\pi}_1(\lambda)$ on $\lambda$, and the substantial influence of $\lambda$ due to the bias-variance trade-off, for two different values of $\pi_1$. In the sparse case (top), the estimated change-point location is at $\hat{k}_1=7$ and the estimated number of false nulls is $\hat{n}_1 = 6$ (true number is $n_1 = 5$).  In the dense case (bottom), the change-point location is at $\hat{k}_1=28$ with an estimated number of false nulls of $\hat{n}_1 = 24$ ($n_1 = 20$). Figure \ref{fig:illust_dos_sparse_dense} also shows how the false null $p$-values exceeding the threshold $p_{(\hat{k}_{\alpha})}$ are relatively few compared to the true null $p$-values. In both scenarios, it is evident how our approach effectively reduces variance and maintains low bias in the associated Storey's estimator (\ref{eq:dos_prop_est}). In larger samples, this effect is best seen when the proportion of false null hypotheses is small, as shown in the simulation study in Section \ref{sec:simulations}.

\begin{figure}[t]
\begin{tabular}{cccc}
\centering
\subfloat{\includegraphics[width=.45\textwidth]{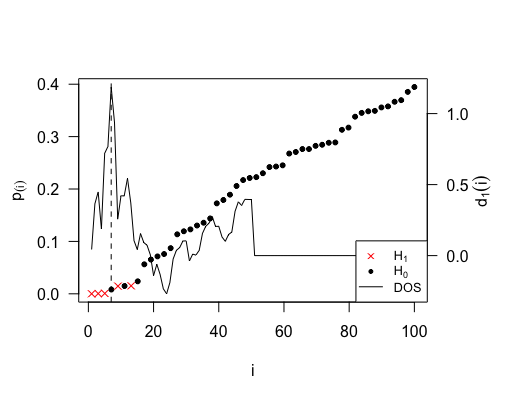}} &
\subfloat{\includegraphics[width=.45\textwidth]{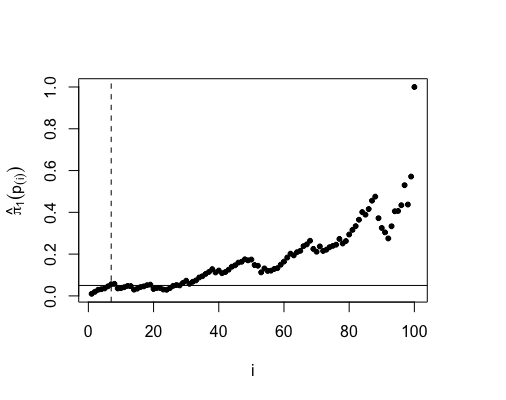}} \\
\subfloat{\includegraphics[width=.45\textwidth]{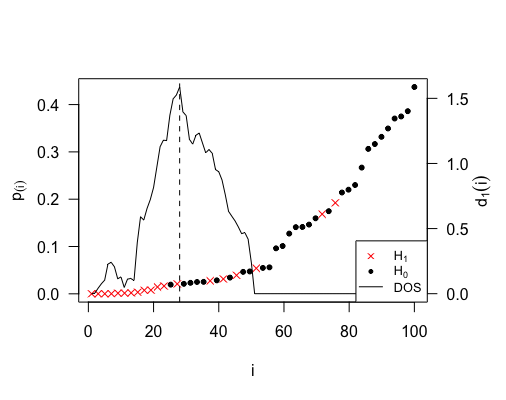}}&
\subfloat{\includegraphics[width=.45\textwidth]{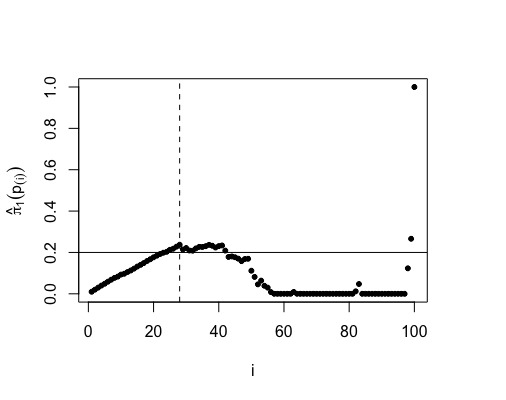}} 
\end{tabular}
\caption{Illustration of the DOS-Storey(1) method for the two examples from Figure \ref{fig:pcws_lin_approx}. Left column: $p$-value plot and the corresponding DOS sequence (solid line) with the estimated change-point location (vertical dashed line) in the sparse (top) and the dense (bottom) case. Right column: The sequence of Storey's estimators (\ref{eq:ss_pi1_est}), with $\lambda = p_{(i)}$, $i = 1,\dots, n$ in the sparse (top) and the dense (bottom) case; solid horizontal line is at the unknown false null proportion level, and dashed vertical line marks $\hat{\pi}_1^1$, the proportion estimated by the DOS-Storey(1) method.}
\label{fig:illust_dos_sparse_dense}
\end{figure}

We now provide further explanation of the DOS statistic. We begin by discussing the impact of parameter $\alpha$ by examining its boundary values, $\alpha = 1$ and $\alpha = 1/2$, which allow us to interpret our statistic within the context of change-point literature. Additionally, we propose an adaptive method that combines these two values, eliminating the need for manual selection.

For $\alpha=1$, the first term in (\ref{eq:diff_slopes_seq}) is the slope of the line connecting points $(i,p_{(i)})$ and $(2i,p_{(2i)})$, while the second term corresponds to the slope of the line connecting $(0,0)$ and $(i,p_{(i)})$. Therefore, $d_1(i)$ is the sequence of slopes differences in the $p$-value plot, and $\hat{k}_{1}$ is the location of the maximum slopes difference. Let $s_j=p_{(j)}-p_{(j-1)}$ be the sequence of spacings of $p$-values and $p_{(0)}=0$. The DOS sequence can be written as
$d_{1}(i) = \frac{1}{i}\sum_{j=1}^{i} s_j- \frac{1}{i}\sum_{j=i+1}^{2i} s_j$.
Thus, the DOS statistic finds the maximum difference of means on symmetric ($(0, i), (i, 2i)$) and increasing intervals ($i=1,\dots, n/2$) in the spacings sequence. A similar statistic, aiming to detect shifts in the piecewise constant mean of an ordered sample, has been studied in the nonparametric change-point literature \citep{Brodsky1993}. Therefore, the DOS statistic can be viewed as a technique for fitting a piecewise constant function to the sequence of spacings $s_i$. An illustration of the piecewise constant fit to the spacings sequence is provided in Figure \ref{fig:piecewise_fitting_illustration}.

\begin{figure}[ht]
\begin{tabular}{cccc}
\subfloat{\includegraphics[width=.5\textwidth]{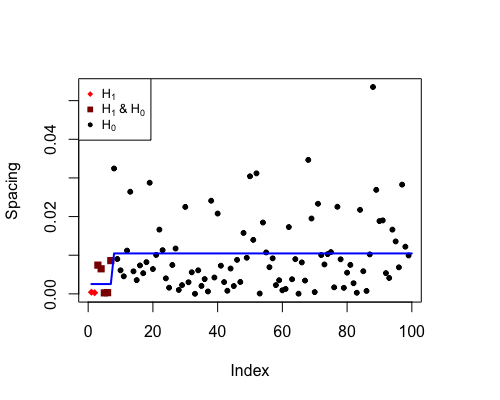}} &
\subfloat{\includegraphics[width=.5\textwidth]{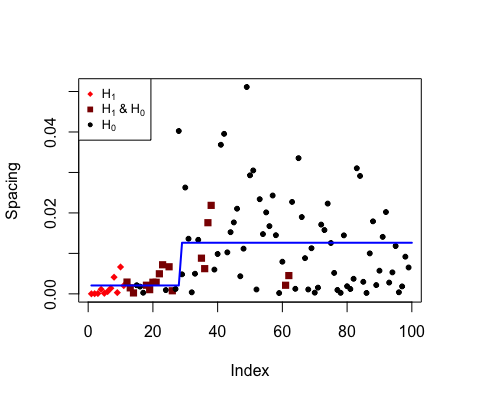}} 
\end{tabular}
\caption{The plots show the sequence of spacings $s_i$, and the piecewise constant fit to $s_i$ (solid line) using the DOS statistic with $\alpha=1$ for the two examples from Figure \ref{fig:pcws_lin_approx}. The spacings are denoted using different shapes: $s_i$ is represented with a diamond if both $p_{(i)}$ and $p_{(i-1)}$ are false null; with a circle if both are true null; and with a square if one $p$-value is true null, and the other is false null.}
\label{fig:piecewise_fitting_illustration}
\end{figure}

Similarly, for $\alpha=1/2$, we can interpret $d_{1/2}(i)$ as the scaled difference between the means of the first $i$ spacings ($s_1, \dots, s_i$) and the second $i$ spacings ($s_{i+1}, \dots, s_{2i}$).
In the context of the change-point literature, the statistic $\max_i d_{1/2}(i)$ can be regarded as a CUSUM-like statistic.

Theoretical results  of Section \ref{sec:theory} suggest that, asymptotically,  the estimated change-point location tends to occur earlier for larger $\alpha$, leading to a more conservative proportion estimate on average.
Simulation results of Section \ref{sec:simulations} show that using $\alpha = 1$ is more conservative and gives better results in very sparse cases, whereas $\alpha = 1/2$ performs better in case of many weaker false-null $p$-values. We propose a simple method of combining these two estimators, as shown in Algorithm \ref{alg:dos_adaptive}, yielding an adaptive DOS-Storey estimator (aDOS-Storey), denoted as $\hat{\pi}_1^a$, which performs well in both sparse and dense cases. 
The sparsity of the problem is first evaluated by thresholding the more conservative $\hat{\pi}_1^1$ estimator. Values of $\hat{\pi}_1^1$ below the threshold indicate that the problem is sparse and justify the use of $\alpha = 1$. Values above the threshold indicate less sparse cases, in which case we choose $\hat{\pi}_1^{1/2}$. We propose a specific threshold in Section \ref{sec:simulations} and find that the adaptive method successfully adjusts to the sparsity of the problem.

\begin{singlespacing}
    \begin{center}
\begin{algorithm}[H]
\caption{Adaptive DOS-Storey Method (aDOS-Storey)}
\hspace*{\algorithmicindent} \textbf{Input}: $p$-values sequence, threshold $\tau$\\
\hspace*{\algorithmicindent} \textbf{Output}: aDOS-Storey estimator $\hat{\pi}_1^a$
\begin{algorithmic}[1]
\If{\( \hat{\pi}_1^1 < \tau \)}
    \State $\hat{\pi}_1^a$ \( \gets \hat{\pi}_1^1 \)
\Else
    \State $\hat{\pi}_1^a$ \( \gets \hat{\pi}_1^{1/2} \)
\EndIf
\end{algorithmic}
\label{alg:dos_adaptive}
\end{algorithm}
\end{center}
\end{singlespacing}

The choice of interval pairs for comparing means is an important topic in the change-point literature. Typically, the largest difference in means on these intervals indicates a change-point. 
We use symmetric intervals ($(0, i), (i, 2i)$) to compute slopes, focusing on local behavior and changes in the quantile function. These intervals expand ($i=1,\dots, n/2$) to incorporate information from an increasing number of $p$-values until a shift to linearity is observed. As the estimated change-point is at most at location $\lfloor n/2 \rfloor$, our method is unsuitable for cases when the proportion of false null hypotheses is high.

We discuss a few alternative interval options. One alternative is to use the difference in slopes statistic on intervals $(1,i)$ and $(i,n)$ for $i = 1,\dots, n$, similar to the standard CUSUM statistic for estimating a single change-point. Simulation results in Section C.2 of the Supplementary Material indicate that this estimator performs reasonably well, but overall our estimator outperforms it. Another option is to use a sliding window of intervals $(i-m,i)$ and $(i, i+m)$ as in the MOSUM procedure \citep{eichingerkirch2018}. However, this approach risks detecting a change even when all $p$-values within the intervals are true null,  due to sample variability. Section D.2 of the Supplementary Material provides additional justification for using increasing in size intervals over sliding window intervals.

Using a change-point method for the purpose of tuning Storey's estimator has been previously mentioned in the literature. In \cite{benjamini2000adaptive}, it is implied that a change-point method can be used for estimating the proportion of false null hypotheses by identifying the end of the linear segment in the $p$-value plot. This approach is explored in \cite{Turkheimer2001} and \cite{Hwang2014}. However, the simulations included in Section C.2 of the Supplementary Material show that the latter two methods do not perform well when compared to our method.

Note that our scenario differs from the conventional change-point in the slope problem. The $p$-value plot has no typical change-point; instead, the change-point is linked to the suggested linear approximation. Asymptotically, the change-point location depends on the properties of the unknown quantile function and on our fitting procedure. Additionally, it is reasonable to think of our method as identifying a `knee' in the quantile plot, a topic briefly discussed in Section D.1 of the Supplementary Material.

\section{Theoretical Considerations}

\label{sec:theory}

We begin by introducing the assumptions on the $p$-value distribution given in  (\ref{eq:mixture_model}).
Different assumptions on $F_1$ can be found in the literature.  
Strong assumptions specify a family of distributions for $F_1$,  \citep{cai2007estimation, Pounds2003}, while weaker ones only restrict its shape. We place the following two assumptions on $F$: 
\begin{itemize}
    \item[(A1)] $F_1$ is a continuous distribution stochastically smaller than $U[0,1]$, with a weakly concave CDF. 
    \item[(A2)] Let 
            \begin{equation} \label{eq:ideal_func}
                 h_\alpha^F(t)\defeq \frac{F^{-1}(2t)-2F^{-1}(t)}{t^\alpha}, \quad t\in (0,1/2).
            \end{equation}
            $h_\alpha^F(t)$ has a unique point of local maximum at $\tilde{t}_\alpha\le 1/2$. \label{it:assumption}
\end{itemize}
Assumption (A1) implies that the density of the false null $p$-values is decreasing and is a common assumption in the literature \citep{langaas2005estimating, CELISSE20103132}.
Assumption (A2) is specific to our approach and is necessary for uniquely defining the asymptotic change-point location.  $h_\alpha^F$ represents the ``ideal function'' that the DOS sequence approximates, since ordered $p$-values are sample quantiles, $\frac{p_{(2i)}-2p_{(i)}}{(i/n)^\alpha} \approx h_\alpha^F(i/n)$. (A2) excludes cases when the signal is too weak, or when the proportion of non-null hypotheses is too large. Examples illustrating when (A2) does not hold are provided in Remark \ref{rem:assumption2} below.

\begin{restatable}[]{theorem}{nonunifmixconsist}
\label{thm:nonunif_mix_consist}
Consider the $p$-value distribution given in (\ref{eq:mixture_model}) and
assume that conditions (A1) and (A2) hold.
Let $p_{(1)},\dots,p_{(n)}$ be the order statistics of the iid sample from (\ref{eq:mixture_model}). Let $\hat{k}_{\alpha}$ and $\hat{\pi}_1^{\alpha}$ be as defined in (\ref{eq:def_dos}) and (\ref{eq:dos_prop_est}), respectively, with $c_n$ such that  $\frac{nc_n}{\log\log n} \to \infty$, 
    $\frac{\log\log(1/c_n)}{\log\log n} \to C <\infty$.   
It holds that
\begin{align}
    \hat{k}_{\alpha}/n &\stackrel{a.s.}{\to}\tilde{t}_\alpha \defeq\ \argmax_{0\le t\le 1/2} \frac{F^{-1}(2t)-2F^{-1}(t)}{t^\alpha} \label{eq:nonunif_consist_k},\\
    p_{(\hat{k}_{\alpha})}&\stackrel{a.s.}{\to} F^{-1}(\tilde{t}_\alpha) \label{eq:nonunif_consist_b},\\
    \hat{\pi}_1^{\alpha}&\stackrel{a.s.}{\to} \tilde{\pi}_1^\alpha \defeq \frac{\tilde{t}_\alpha-F^{-1}(\tilde{t}_\alpha)}{1-F^{-1}(\tilde{t}_\alpha)} \le \pi_1. \label{eq:nonunif_consist_pi1}
\end{align}
\end{restatable}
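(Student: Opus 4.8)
The plan is to recast the discrete maximization defining $\hat{k}_\alpha$ as the argmax of a random function that converges uniformly to the ideal function $h_\alpha^F$ of (A2), and then to run an argmax-consistency argument. First I would pass to the quantile transform: writing $V_n$ for the empirical quantile (left-continuous inverse) function of an iid $U[0,1]$ sample, one has $p_{(i)}\stackrel{d}{=}F^{-1}(U_{(i)})=F^{-1}(V_n(i/n))$. Setting $t=i/n$ and multiplying by $n^\alpha$ gives the scale-invariant reformulation
\[
n^\alpha d_\alpha(i)=\frac{F^{-1}(V_n(2t))-2F^{-1}(V_n(t))}{t^\alpha}\eqdef g_n(t),
\]
so that $\hat{k}_\alpha/n=\argmax_{t\in\mathcal{G}_n}g_n(t)$ over the grid $\mathcal{G}_n=\{i/n: nc_n\le i\le n/2\}$, while the target is $\dbtilde{t}_\alpha=\argmax_{0\le t\le 1/2}h_\alpha^F(t)$. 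Here $g_n$ is exactly the empirical-quantile analogue of $h_\alpha^F$.

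The core of the proof is the almost-sure uniform approximation $\sup_{c_n\le t\le 1/2}|g_n(t)-h_\alpha^F(t)|\to 0$. Since $F'=\pi_1 f_1+\pi_0\ge\pi_0>0$, the map $F^{-1}$ is Lipschitz with constant $1/\pi_0$ on $[0,1]$, so the mean value theorem gives
\[
|g_n(t)-h_\alpha^F(t)|\le\frac{1}{\pi_0}\cdot\frac{|V_n(2t)-2t|+2\,|V_n(t)-t|}{t^\alpha},
\]
and it suffices to control $\sup_{t\ge c_n}t^{-\alpha}|V_n(s)-s|$ for $s\in\{t,2t\}$. This is where the theory of quantile processes enters and where I expect the main obstacle to lie: a law of the iterated logarithm for the uniform quantile process yields $|V_n(s)-s|=O(\sqrt{s\log\log n/n})$, so the bound is of order $\sqrt{\log\log n/n}\;t^{1/2-\alpha}$, whose worst case (at $t=c_n$, $\alpha=1$) is $\sqrt{\log\log n/(nc_n)}$. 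The two hypotheses on $c_n$ are precisely what makes this control uniform down to the shrinking endpoint: heuristically $nc_n/\log\log n\to\infty$ renders the quantile fluctuations relatively negligible at $t=c_n$, while $\log\log(1/c_n)/\log\log n\to C$ keeps the iterated-logarithm normalization valid uniformly on $[c_n,1/2]$.

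With the uniform approximation in hand, (\ref{eq:nonunif_consist_k}) follows from a standard argmax argument. The function $h_\alpha^F$ extends continuously to $[0,1/2]$ with $h_\alpha^F(0)=0$, because its numerator is a second difference of the convex function $F^{-1}$ and vanishes faster than $t^\alpha$; by (A2) its maximizer $\dbtilde{t}_\alpha$ is unique, hence well separated on the compact domain, i.e.\ $\sup_{|t-\dbtilde{t}_\alpha|\ge\eps}h_\alpha^F<h_\alpha^F(\dbtilde{t}_\alpha)$ for every $\eps>0$. Comparing the value of $g_n$ at $\hat{k}_\alpha/n$ with its value at the grid point nearest $\dbtilde{t}_\alpha$ then forces $\hat{k}_\alpha/n\to\dbtilde{t}_\alpha$ a.s. For (\ref{eq:nonunif_consist_b}) I would write $p_{(\hat{k}_\alpha)}=F^{-1}(V_n(\hat{k}_\alpha/n))$ and combine $\hat{k}_\alpha/n\to\dbtilde{t}_\alpha$ with the uniform convergence $V_n(t)\to t$ and the continuity of $F^{-1}$ at $\dbtilde{t}_\alpha$.

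Finally, (\ref{eq:nonunif_consist_pi1}) is the continuous mapping theorem applied to (\ref{eq:dos_prop_est}) together with the first two limits, the denominator $1-F^{-1}(\dbtilde{t}_\alpha)$ staying bounded away from $0$ since $\dbtilde{t}_\alpha\le 1/2$. The inequality $\le\pi_1$ is then pure algebra: writing $b=F^{-1}(\dbtilde{t}_\alpha)$ so that $\dbtilde{t}_\alpha=F(b)=\pi_1F_1(b)+\pi_0 b$, the limiting value simplifies to
\[
\frac{F(b)-b}{1-b}=\pi_1\,\frac{F_1(b)-b}{1-b},
\]
and (A1) (stochastic dominance gives $F_1(b)\ge b$, while trivially $F_1(b)\le 1$) confines the fraction to $[0,1]$, so the limit lies in $[0,\pi_1]$, with equality only when $F_1(b)=1$.
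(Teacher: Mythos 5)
Your proposal follows essentially the same route as the paper's proof: both rewrite $d_\alpha$ as an empirical version of $h_\alpha^F$, transfer the quantile process of $F$ to the uniform quantile process via the $1/\pi_0$-Lipschitz property of $F^{-1}$ (the paper's Lemma \ref{lemma:uniform_mix_quantile_process}, derived there from concavity), control the weighted uniform quantile process down to the shrinking endpoint $c_n$ by the Einmahl--Mason law of the iterated logarithm --- precisely the role of the two conditions on $c_n$, as you correctly identify --- and finish with an argmax-consistency argument plus continuity for (\ref{eq:nonunif_consist_b}) and (\ref{eq:nonunif_consist_pi1}). The differences are minor: the paper's argmax step runs a case analysis on the differentiability of $h_\alpha^F$ at $\dbtilde{t}_\alpha$ to also extract convergence rates (not claimed in the theorem statement), whereas you use a well-separation argument giving consistency only, and you make explicit the closing algebra $\frac{F(b)-b}{1-b}=\pi_1\frac{F_1(b)-b}{1-b}\le\pi_1$ that the paper leaves implicit.
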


Theorem \ref{thm:nonunif_mix_consist} explains the asymptotic behavior of the estimated change-point location and the estimated proportion in terms of the ideal quantities, which are functionals of the $p$-value distribution quantile function. The convergence rates of the statistics in Theorem \ref{thm:nonunif_mix_consist} are considered within its proof in Section A of the Supplementary Material. These rates depend on the differentiability of $h_\alpha^F$ at $\tilde{t}_\alpha$, and the degree of ``flatness'' of $h_\alpha^F$ at $\tilde{t}_\alpha$, which is quantified using higher order derivatives.

When $\alpha_1<\alpha_2$, the decreasing function $1/t^{\alpha_2-\alpha_1}$ implies that $\argmax_t h_{\alpha_1}^F(t) > \argmax_t h_{\alpha_2}^F(t)$. This suggests that for larger $\alpha$ values, the ``change-point'' occurs later. From Theorem \ref{thm:nonunif_mix_consist} it also follows that the rate of convergence is slower for smaller $\alpha$.

%make this shorter about corollaries...
The following two corollaries follow easily from Theorem \ref{thm:nonunif_mix_consist} and we provide them without proof. They illustrate the behavior of the proposed statistics in two specific cases.

%Corollary \ref{cor:unif_mix} considers a special case when the $p$-values come from a mixture of two uniform distributions. In that case, $F$ and $F^{-1}$ are piecewise linear functions with one change-point where the slope changes. From Theorem \ref{thm:nonunif_mix_consist} it follows that  $\hat{k}_{\alpha}/n$ consistently estimates this change-point and that $\hat{\pi}_1^{\alpha}$ is an unbiased estimator of $\pi_1$ for any $\alpha\in [1/2,1]$, with the consistency rates uniquely determined and provided in the statement.

\begin{restatable}[]{corollary}{corconsistunifmix}
\label{cor:unif_mix}
Let $p_{(1)},\dots, p_{(n)}$ be the order statistics of the iid sample of $p$-values coming from a mixture of two uniform distributions
\begin{equation} \label{eq:unif_mix}
    \pi_1U[0,b]+\pi_0U[0,1],
\end{equation}
where $0<b<1$.
Let $\hat{k}_{\alpha}$ and $\hat{\pi}_1^{\alpha}$, be the corresponding statistics proposed in (\ref{eq:def_dos}) and (\ref{eq:dos_prop_est}), respectively, with $c_n$ such that  $\frac{nc_n}{\log\log n} \to \infty$, 
    $\frac{\log\log(1/c_n)}{\log\log n} \to C <\infty$.  
It holds that $\hat{k}_{\alpha}/n \stackrel{a.s.}{\to}\pi_1+b\pi_0$,
     $p_{(\hat{k}_{\alpha})}\stackrel{a.s.}{\to} b$,
    $\hat{\pi}_1^{\alpha}\stackrel{a.s.}{\to} \pi_1$.
For large enough $n$, with probability one it holds that
$\abs{p_{(\hat{k}_{\alpha})}-b}\le C\frac{\log\log n}{nc_n^{2\alpha-1}}$;
$\abs{\hat{\pi}^{DOS}_1-\pi_1} \le C\frac{\log\log n}{nc_n^{2\alpha-1}}$.
Thus, $p_{(\hat{k}_{\alpha})}$ and $\hat{\pi}_1^{\alpha}$ are strongly consistent estimators of the uniform mixture parameters $b$ and $\pi_1$, respectively.
\end{restatable}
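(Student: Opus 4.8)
The plan is to specialise Theorem~\ref{thm:nonunif_mix_consist} to the uniform mixture (\ref{eq:unif_mix}) by computing the relevant functionals of $F$ explicitly. First I would record the CDF and its inverse. Writing $\tau \defeq \pi_1+b\pi_0 = F(b)$, the mixture (\ref{eq:unif_mix}) has the piecewise linear CDF
\begin{equation*}
    F(x) = \begin{cases} (\pi_1/b + \pi_0)\,x, & 0\le x\le b,\\ \pi_1 + \pi_0 x, & b< x\le 1,\end{cases}
\end{equation*}
with a single kink at $x=b$, and correspondingly the piecewise linear quantile function
\begin{equation*}
    F^{-1}(y) = \begin{cases} (b/\tau)\,y, & 0\le y\le \tau,\\ (y-\pi_1)/\pi_0, & \tau< y\le 1,\end{cases}
\end{equation*}
which is continuous, with left slope $b/\tau$ strictly below the right slope $1/\pi_0$ since $\pi_1>0$. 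Assumption (A1) is then immediate: $F_1$ is the continuous $U[0,b]$ CDF, which is concave and satisfies $F_1(x)\ge x$, i.e.\ it is stochastically smaller than $U[0,1]$.

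Next I would compute the ideal function $h_\alpha^F$ from (\ref{eq:ideal_func}) by a short case analysis on the positions of $t$ and $2t$ relative to $\tau$. When $2t\le\tau$ both quantile evaluations lie on the first linear piece and the numerator $F^{-1}(2t)-2F^{-1}(t)$ vanishes, so $h_\alpha^F\equiv 0$ on $(0,\tau/2]$. When $t>\tau$ both lie on the second piece and the numerator equals the constant $\pi_1/\pi_0$, so $h_\alpha^F(t)=\pi_1/(\pi_0 t^\alpha)$ is strictly decreasing there. On the middle range $\tau/2<t\le\tau$ a direct computation gives $F^{-1}(2t)-2F^{-1}(t)=(\pi_1/\pi_0)(2t/\tau-1)$, and dividing by $t^\alpha$ yields a function whose derivative is a sum of nonnegative terms for every $\alpha\in[1/2,1]$, hence strictly increasing. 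Thus $h_\alpha^F$ rises from $0$ to its maximum at $t=\tau$ and then decreases, exhibiting a single corner-type maximum at $\dbtilde{t}_\alpha=\tau$; this verifies (A2) in the regime of interest $\tau\le 1/2$ (the only flat stretch sits at the minimal value $0$, not at the maximum, and $\tau\le 1/2$ makes the maximiser interior and guarantees $2\tau\le 1$). Feeding $\dbtilde{t}_\alpha=\pi_1+b\pi_0$ into (\ref{eq:nonunif_consist_k})--(\ref{eq:nonunif_consist_pi1}) then gives the three almost sure limits by direct substitution: $p_{(\hat{k}_\alpha)}\to F^{-1}(\tau)=b$, and since $\tau-b=\pi_1+b\pi_0-b=\pi_1(1-b)$, the proportion limit simplifies to $(\tau-b)/(1-b)=\pi_1$.

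For the rate statements I would invoke the convergence-rate analysis carried out inside the proof of Theorem~\ref{thm:nonunif_mix_consist} rather than re-derive it. The key structural feature is that the maximum of $h_\alpha^F$ at $\tau$ is a \emph{corner} with nonzero, $O(1)$ one-sided slopes (a left derivative $\tfrac{\pi_1}{\pi_0}(2-\alpha)\tau^{-\alpha-1}>0$ and a right derivative $-\tfrac{\alpha\pi_1}{\pi_0}\tau^{-\alpha-1}<0$), so $h_\alpha^F(\tau)-h_\alpha^F(t)\gtrsim|t-\tau|$ near the peak; this is the least ``flat'' case in the sense discussed after Theorem~\ref{thm:nonunif_mix_consist}, and it yields the fastest localisation. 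Quantitatively, the deterministic linear drop $\kappa|t-\tau|$ must be overcome by the fluctuations of the DOS sequence around $h_\alpha^F$, which are governed by the law of the iterated logarithm for the quantile process; balancing a linear drop against these fluctuations produces the $\log\log n$ numerator, while the $c_n^{2\alpha-1}$ in the denominator is inherited from the size of those fluctuations at the left truncation level $i\asymp nc_n$, where $h_\alpha^F$ is flat and the normalisation $i^{-\alpha}$ is largest. Transferring the resulting bound on $\hat{k}_\alpha/n-\tau$ through the locally Lipschitz, piecewise linear $F^{-1}$ gives the stated rate for $p_{(\hat{k}_\alpha)}-b$, after which $\hat{\pi}_1^\alpha$ in (\ref{eq:dos_prop_est}), a smooth function of $\hat{k}_\alpha/n$ and $p_{(\hat{k}_\alpha)}$ with denominator bounded away from zero, inherits the same rate. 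I expect the genuine obstacle to be precisely this last, rate-level step: the almost sure limits fall out of Theorem~\ref{thm:nonunif_mix_consist} by substitution, but pinning down the exact $\log\log n/(nc_n^{2\alpha-1})$ order requires combining the quantile-process fluctuation bounds with the corner geometry, exactly as in the main proof.
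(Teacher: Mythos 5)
Your proposal is correct and takes essentially the same route as the paper: the paper states Corollary~\ref{cor:unif_mix} without a separate proof, as a direct specialization of Theorem~\ref{thm:nonunif_mix_consist} in which the improved rate $\log\log n/(nc_n^{2\alpha-1})$ comes from Case 3 (non-differentiable, corner-type maximum) of the theorem's proof, and your explicit computations of $F$, $F^{-1}$, $h_\alpha^F$, the maximiser $\dbtilde{t}_\alpha=\pi_1+b\pi_0$, the simplification $(\dbtilde{t}_\alpha-b)/(1-b)=\pi_1$, and the nonzero one-sided slopes fill in exactly that specialization (including the caveat that $\pi_1+b\pi_0\le 1/2$ is needed for (A2)). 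The one step to tighten is the transfer of the rate to $p_{(\hat{k}_\alpha)}$: since $p_{(\hat{k}_\alpha)}=\hat{F}_n^{-1}(\hat{t}^d)$ rather than $F^{-1}(\hat{t}^d)$, the bound requires the Chung--Smirnov term $\abs{\hat{F}_n^{-1}(\hat{t}^d)-F^{-1}(\hat{t}^d)}$ in addition to the Lipschitz transfer of $\abs{\hat{t}^d-\dbtilde{t}}$, exactly as in the decomposition (\ref{eq:b_consistency_terms2}) of the paper's own proof of the theorem.
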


The following Corollary \ref{cor:non_unif_underest} shows that in general, when the support of the false null distribution is $[0,b]$, $p_{(\hat{k}_{\alpha})}$ will a.s. not overestimate $b$, so $\hat{\pi}_1^{\alpha}$ will be a conservative estimator of the proportion.

\begin{restatable}[]{corollary}{nonunifunderest}
\label{cor:non_unif_underest}
Let $[0,b]$, $b\le 1$ be the support of the alternative distribution $F_1$, where $F_1$ is stochastically smaller than $U[0,b]$ distribution, in the sense that $F_1(t)\ge t/b$ for all $0\le t\le b$. Then, $p_{(\hat{k}_{\alpha})}$ is an almost surely conservative estimator of the support boundary $b$, meaning that $p_{(\hat{k}_{\alpha})} \stackrel{a.s.}{\to} \tilde{b}$ where $\tilde{b}\le b$.
\end{restatable}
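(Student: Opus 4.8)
The plan is to read off the almost-sure limit of $p_{(\hat{k}_{\alpha})}$ from Theorem \ref{thm:nonunif_mix_consist} and then show that this limit never exceeds $b$. By (\ref{eq:nonunif_consist_b}) we have $p_{(\hat{k}_{\alpha})} \stackrel{a.s.}{\to} F^{-1}(\dbtilde{t}_\alpha)$, where $\dbtilde{t}_\alpha = \argmax_{0\le t\le 1/2} h_\alpha^F(t)$ with $h_\alpha^F(t) = (F^{-1}(2t)-2F^{-1}(t))/t^\alpha$. Since ``conservative estimator of $b$'' means that the limit does not overshoot $b$, it suffices to establish $F^{-1}(\dbtilde{t}_\alpha)\le b$; because $F^{-1}$ is increasing and $F$ is continuous and strictly increasing (as $\pi_0>0$), this is equivalent to $\dbtilde{t}_\alpha \le F(b)$. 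First I would record that the concavity of $F_1$ in (A1), together with $F_1(0)=0$ and $F_1(b)=1$, forces $F_1(t)\ge t/b$ on $[0,b]$, so the stochastic-dominance hypothesis is consistent with (A1) and the theorem applies.

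Next I would compute the quantile function on the linear part of the model. Writing $u^\ast \defeq F(b)=\pi_1+\pi_0 b$ and using $F_1\equiv 1$ on $[b,1]$, the CDF equals $F(x)=\pi_1+\pi_0 x$ for $x\ge b$, hence $F^{-1}(u)=(u-\pi_1)/\pi_0$ for every $u\ge u^\ast$. The key computation is that for any $t\ge u^\ast$ both arguments $t$ and $2t$ lie in this linear region, so $F^{-1}(2t)-2F^{-1}(t)=\pi_1/\pi_0$ is constant, giving
\begin{equation*}
    h_\alpha^F(t)=\frac{\pi_1}{\pi_0\, t^\alpha},\qquad t\in[u^\ast,1/2].
\end{equation*}
Since $\pi_1>0$ and $\alpha>0$, this is strictly decreasing on $[u^\ast,1/2]$.

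I would then conclude as follows. If $u^\ast>1/2$, the whole search interval $[0,1/2]$ lies below $u^\ast$, so $F^{-1}(\dbtilde{t}_\alpha)\le F^{-1}(1/2)<F^{-1}(u^\ast)=b$ by monotonicity of $F^{-1}$. If $u^\ast\le 1/2$, the strict monotonicity just derived shows that no point of $(u^\ast,1/2]$ can be the maximiser, since each such point is strictly dominated by the value at $u^\ast$; hence $\dbtilde{t}_\alpha\le u^\ast$ and therefore $F^{-1}(\dbtilde{t}_\alpha)\le F^{-1}(u^\ast)=b$. In either case $p_{(\hat{k}_{\alpha})}\stackrel{a.s.}{\to}F^{-1}(\dbtilde{t}_\alpha)\le b$, which is the claim. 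The argument is short because the substantive work is done by Theorem \ref{thm:nonunif_mix_consist}; the only point requiring care is the behaviour of $h_\alpha^F$ across the kink of $F^{-1}$ at $u^\ast$, which is why I isolate the clean linear region $t\ge u^\ast$ rather than analysing $h_\alpha^F$ globally. The main (minor) obstacle is thus handling the two boundary configurations $u^\ast\lessgtr 1/2$ and confirming that (A1)--(A2) hold under the corollary's stochastic-dominance assumption.
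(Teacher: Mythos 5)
Your proof is correct, and it is essentially the paper's own route: the paper states this corollary without proof as an easy consequence of Theorem \ref{thm:nonunif_mix_consist}, and your argument is exactly that intended derivation. The key step --- that $F^{-1}$ is affine on $[F(b),1]$, so $h_\alpha^F(t)=\pi_1/(\pi_0 t^\alpha)$ is strictly decreasing for $t\ge F(b)=\pi_1+\pi_0 b$, forcing $\dbtilde{t}_\alpha\le F(b)$ and hence $F^{-1}(\dbtilde{t}_\alpha)\le b$ --- is sound, as is your observation that (A1) together with support $[0,b]$ already implies the corollary's stochastic-dominance hypothesis.
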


\begin{remark}
\label{rem:assumption2}
\normalfont    
There are two scenarios where Assumption 2 can fail. First, if $h_\alpha^F(t)$ is constant on an interval where it achieves its maximum value. This constant behavior of $h_\alpha^F(t)$ is a highly specific scenario and not easily characterized by conditions on $F$. It can happen when the $p$-value distribution is a mixture of uniforms. 
Second, it is possible for $h_\alpha^F(t)$ to be increasing on $[0,1/2]$ if the signal is too weak or the false null proportion is too large. These examples are shown in Figures \ref{fig:counterexample_unif} and \ref{fig:counterexample_gaus}. To illustrate that for a typical $p$-value model these scenarios do not pose an issue, we obtained numerical results in Mathematica, calculating $\tilde{t}_\alpha$, the asymptotic change-point location, for the one sided $p$-values from a two-point Gaussian mixture $(1-\pi_1)N(0,1) + \pi_1N(\mu_1,1)$. The locations of the ideal change-points are shown in Figure \ref{fig:true_cploc} for various values of $\pi_1$ and $\mu_1$. As expected, stronger signal gives better separation between the two distributions resulting in the change-point approaching the unknown proportion. Moreover, the numerical results show that Assumption (A2) holds for almost all of the cases where $\mu_1\ge 2$ and $\pi_1 \le 0.3$, $\tilde{t}_\alpha < 0.5$ for both $\alpha = 1/2$ and $\alpha = 1$. However, for $\alpha = 1/2$ and $\mu_1 = 2$ with $\pi_1 \ge 0.29$, it appears that Assumption (A2) does not hold, suggesting that in this case, asymptotically, our method reduces to Storey's estimator with $\lambda = p_{(n/2)}$, as proposed in \cite{benjamini2006adaptive}. 
We note that these conditions on $\mu_1$ and $\pi_1$ do not depend on the sample size $n$.

\begin{figure}[H]
\begin{tabular}{cc}
\centering
\subfloat{\includegraphics[width=.4\textwidth]{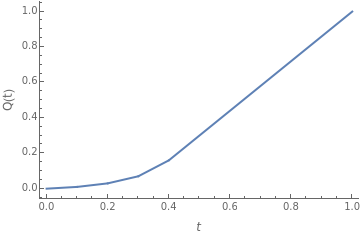}} &
\subfloat{\includegraphics[width=.4\textwidth]{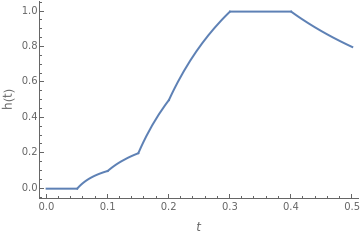}} 
\end{tabular}
\caption{An example where Assumption (A2) is violated as $h$ is constant on an interval. The $p$-value distribution is modeled as a mixture of several uniforms. Left: Quantile function $F^{-1}$  of the uniform mixture. Right: The function $h$ is constant on $[0.3, 0.4]$ where it reaches its maximum value.}
\label{fig:counterexample_unif}
\end{figure}
\begin{figure}[H]
\begin{tabular}{cc}
\centering
\subfloat{\includegraphics[width=.4\textwidth]{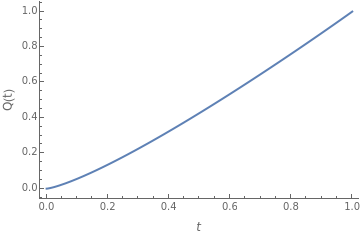}} &
\subfloat{\includegraphics[width=.4\textwidth]{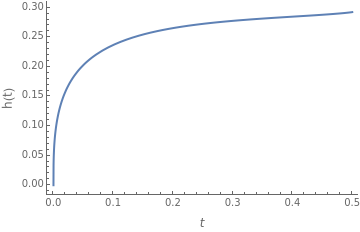}} 
\end{tabular}
\caption{An example where Assumption (A2) is violated as $h$ increases over $[0, 0.5]$. Left: Quantile function $F^{-1}$ of one-sided $p$-values from the Gaussian model $\pi_1N(\mu_1, 1) + \pi_0N(0,1)$ with $\pi_1 = 0.2$ and $\mu = 1$. Right: Corresponding $h$ function.}
\label{fig:counterexample_gaus}
\end{figure}

\begin{figure}[h]
\begin{tabular}{cc}
\subfloat{\includegraphics[width=.45\textwidth]{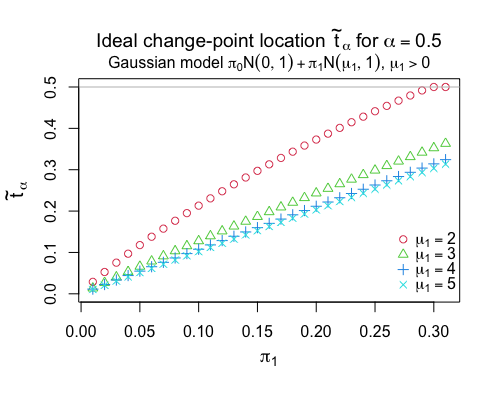}} &
\subfloat{\includegraphics[width=.45\textwidth]{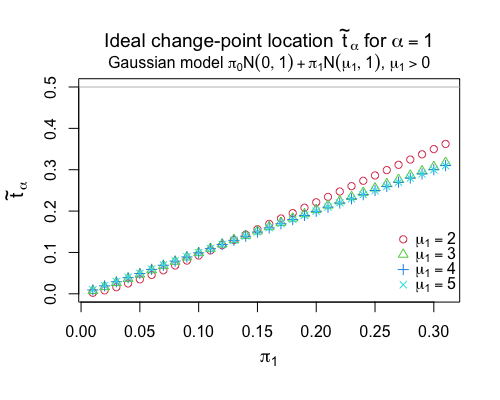}}
\end{tabular}
\caption{The points on the plots represent the locations of the asymptotic change-point $\tilde{t}_\alpha$ in the quantile function for the Gaussian model $\pi_0N(0,1) + \pi_1N(\mu_1,1)$, $\mu_1 > 0$, for $\alpha = 1/2$ (left) and $\alpha = 1$ (right) and various values of $\pi_1$ ($x$-axis) and different symbols representing the values of $\mu_1$.}
\label{fig:true_cploc}
\end{figure}

\end{remark}

\section{Simulations}
\label{sec:simulations}

In this section, we assess the performance of the DOS-Storey methods ($\hat{\pi}_1^1, \hat{\pi}_1^{1/2}$ and $\hat{\pi}_1^{a}$) by comparing them with various proportion estimators from the literature in terms of their bias, standard deviation (SD), and root mean squared error (RMSE).

The simulation setting considers the  Gaussian mean testing problem, $H_0: \mu = 0$ against $H_1: \mu >0$. The test statistics $T_i$ follow a $N(0,1)$ distribution under the null hypotheses and $N(\mu_1,1)$ with $\mu_1>0$ under the alternative. One-sided $p$-values are calculated from the test statistics as $p_i = 1 - \Phi(T_i)$ where $\Phi$ is the standard Gaussian CDF. We consider a fixed proportion of false null hypotheses. That is, for a sample of size $n$, and a given false null proportion $\pi_1$, the number of false null test statistics is set to $\lfloor n\pi_1 \rfloor$.

The adaptive estimator $\hat{\pi}_1^a$ depends on a threshold $\tau$, as described in Algorithm \ref{alg:dos_adaptive}. After experimenting with various constant values for $\tau$, we found that setting $\tau = n^{-1/2}$ best adapts to sparse and dense cases. This threshold, commonly regarded as the boundary for sparsity in theoretical analyses (e.g., \cite{Meinshausen2006}), will be used in the simulations below.
 
The simulations in Section C.1 of the Supplementary Material show that excluding the values from the beginning of the sequence $d(i)$ does not affect the estimates in practice. Therefore, we do not exclude any values when computing the estimates using the DOS method, i.e. we set $nc_n = 1$.

Below, we list and briefly describe the methods used in the simulation study.
\begin{enumerate}
    \item STS -- Storey and Tibshirani's \textit{smoother} method from \cite{storey2003statistical}, implemented in the \code{R} package \code{qvalue} by \cite{pack_qvalue}
 %   \item ST \cite{storey2003statistical}
    \item MGF -- Moment Generating Function method by \cite{broberg2005comparative}, implemented in the \code{R} package \code{SAGx} by \cite{pack_SAGx}
    \item LLF -- Langaas-Lindqvist-Ferkingstad by \cite{langaas2005estimating} 
    \item LSL -- Lowest Slope estimator by \cite{benjamini2000adaptive}
    \item MR -- Meinshausen-Rice by \cite{Meinshausen2006}
    \item JD -- Jiang-Doerge by \cite{Jiang2008}
    \item ST-MED -- Storey's estimator (\ref{eq:ss_pi1_est}) with  $\lambda=p_{(n/2)}$, as proposed in \cite{benjamini2006adaptive}
    \item ST-1/2 -- Storey's estimator (\ref{eq:ss_pi1_est}) with  $\lambda=1/2$
\end{enumerate}
Among the methods listed above, Storey-based methods include LSL, JD, ST-MED, and ST-1/2. LSL aims to identify the onset of the linear part and results in a conservative estimator. JD uses bootstrap and averages Storey's proportion estimator across several $\lambda$ values.
The statistical literature on adaptive FDR control typically recommends using ST-1/2  \citep{Blanchard2009, pmlr-v48-lei16, Ignatiadis2021}. Additionally, we have LLF, a density estimation-based method, and  MR, a consistent estimator constructed using the empirical processes theory. MGF is a moment generating function-based method that accounts for the behavior under the alternative. STS uses spline smoothing to combine the information from several $\lambda$ values close to 1. The implementation of STS within the function \code{pi0est} from the \code{R} package \code{qvalue} is not suitable for small sample sizes and typically requires a sample size of $n \ge 200$. 

Table \ref{tab:dos_sim_n1000} provides a comparison of various proportion estimators for a sample size $n=1000$ and different values of $\mu_1$ and $\pi_1$, based on $N=1000$ repetitions.
The results show that, in terms of the RMSE, $\hat{\pi}_1^1$ performs better in sparse cases, whereas $\hat{\pi}_1^{1/2}$ is better suited when there is a higher proportion of false nulls. The aDOS-Storey estimator $\hat{\pi}_1^a$ with $\tau = n^{-1/2}$ successfully adapts to the sparsity level, resulting in the overall best performing estimator.

Simulation results for small sample sizes $n=50$ and $n=100$ are presented in Table \ref{tab:small_sample_dos}. The STS method is excluded as it requires a larger sample size to compute the estimates. The results indicate that both $\hat{\pi}_1^1$ and $\hat{\pi}_1^{1/2}$ exhibit one of the smallest RMSE values among the considered estimators. It remains that $\hat{\pi}_1^1$ performs better in sparser cases, while $\hat{\pi}_1^{1/2}$  performs better in dense cases. Among the competing methods $\hat{\pi}_1^{a}$ behaves the best overall for all sparsity levels. 

\begin{table}[]
{\footnotesize
\begin{tabular}{lrrrrrrrrrrr}
     & $\hat{\pi}_1^1  $              & $\hat{\pi}_1^{1/2}$            & $\hat{\pi}_1^a $             & ST-1/2 & ST-MED              & JD    & LLF  & LSL                & MGF                 & MR    & STS   \\ \hline
\multicolumn{12}{c}{$\mu_1 = 3.5,  \pi_1 = 0.01, n_1 = 10$}                                                                                                                     \\ \hline
BIAS & -0.6                & 9.6                 & -0.6                & 8.4    & 8.0                 & 7.8   & 17.7 & -3.8               & 3.8                 & -4.6  & 32.0  \\
SD   & 3.8                 & 15.7                & 3.8                 & 22.0   & 20.5                & 21.2  & 22.7 & 2.5                & 14.4                & 5.7   & 55.2  \\
RMSE & {\ul \textbf{3.8}}  & 18.4                & {\ul \textbf{3.8}}  & 23.5   & 22.0                & 22.6  & 28.8 & {\ul \textbf{4.6}} & 14.8                & 7.3   & 63.8  \\ \hline
\multicolumn{12}{c}{$\mu_1 = 3.5, \pi_1 = 0.03, n_1 = 30$}                                                                                                                      \\ \hline
BIAS & -2.9                & 6.6                 & -2.2                & 3.2    & 3.3                 & 2.3   & 16.9 & -7.4               & 0.9                 & -8.5  & 23.8  \\
SD   & 5.6                 & 14.1                & 7.5                 & 26.7   & 25.2                & 26.1  & 22.1 & 3.5                & 17.5                & 6.4   & 61.3  \\
RMSE & {\ul \textbf{6.3}}  & 15.6                & {\ul \textbf{7.8}}  & 26.9   & 25.4                & 26.2  & 27.8 & 8.2                & 17.6                & 10.7  & 65.8  \\ \hline
\multicolumn{12}{c}{$\mu_1 = 3.0, \pi_1 = 0.05, n_1 = 50$}                                                                                                                      \\ \hline
BIAS & -8.6                & 4.8                 & 2.0                 & 1.1    & 1.4                 & -0.5  & 16.6 & -17.4              & -0.5                & -16.1 & 17.9  \\
SD   & 8.4                 & 16.2                & 16.8                & 29.3   & 27.3                & 29.3  & 24.3 & 5.4                & 18.2                & 8.1   & 67.0  \\
RMSE & {\ul \textbf{12.0}} & {\ul \textbf{16.9}} & {\ul \textbf{16.9}} & 29.3   & 27.4                & 29.3  & 29.4 & 18.2               & 18.2                & 18.1  & 69.3  \\ \hline
\multicolumn{12}{c}{$\mu_1 = 2.0, \pi_1 = 0.1, n_1 = 100$}                                                                                                                      \\ \hline
BIAS & -36.8               & -3.7                & -5.7                & -3.8   & -4.8                & -7.1  & 12.8 & -67.7              & -14.0               & -44.4 & 8.0   \\
SD   & 19.4                & 24.1                & 26.3                & 29.9   & 26.6                & 31.8  & 31.1 & 9.2                & 18.3                & 14.6  & 77.7  \\
RMSE & 41.6                & {\ul \textbf{24.3}} & 26.9                & 30.2   & 27.0                & 32.6  & 33.6 & 68.3               & {\ul \textbf{23.0}} & 46.7  & 78.1  \\ \hline
\multicolumn{12}{c}{$\mu_1 = 3.0, \pi_1 = 0.1, n_1 = 100$}                                                                                                                      \\ \hline
BIAS & -13.8               & 0.9                 & 0.9                 & 0.5    & 0.6                 & -2.8  & 16.6 & -28.1              & -1.3                & -22.9 & 7.4   \\
SD   & 10.5                & 16.8                & 16.8                & 29.7   & 26.6                & 31.6  & 24.9 & 7.8                & 18.0                & 8.9   & 77.6  \\
RMSE & {\ul \textbf{17.4}} & {\ul \textbf{16.9}} & {\ul \textbf{16.9}} & 29.7   & 26.6                & 31.8  & 29.9 & 29.2               & 18.0                & 24.6  & 77.9  \\ \hline
\multicolumn{12}{c}{$\mu_1 = 2.0, \pi_1 = 0.2, n_1 = 200$}                                                                                                                      \\ \hline
BIAS & -48.9               & -16.1               & -16.1               & -8.7   & -13.3               & -11.9 & 9.6  & -117.8             & -28.7               & -62.4 & 3.0   \\
SD   & 25.7                & 23.2                & 23.2                & 28.4   & 22.4                & 31.5  & 33.8 & 15.9               & 17.3                & 17.4  & 81.2  \\
RMSE & 55.2                & {\ul \textbf{28.2}} & {\ul \textbf{28.2}} & 29.7   & {\ul \textbf{26.0}} & 33.6  & 35.1 & 118.9              & 33.5                & 64.7  & 81.3  \\ \hline
\multicolumn{12}{c}{$\mu_1 = 3.0, \pi_1 = 0.2, n_1 = 200$}                                                                                                                      \\ \hline
BIAS & -20.6               & -2.9                & -2.9                & -0.2   & 0.1                 & -4.0  & 15.9 & -44.0              & -3.8                & -31.5 & 1.5   \\
SD   & 12.9                & 17.1                & 17.1                & 27.9   & 22.1                & 32.3  & 25.0 & 10.7               & 16.8                & 9.8   & 81.3  \\
RMSE & 24.3                & {\ul \textbf{17.3}} & {\ul \textbf{17.3}} & 27.9   & 22.1                & 32.5  & 29.6 & 45.3               & {\ul \textbf{17.2}} & 33.0  & 81.3  \\ \hline
\multicolumn{12}{c}{$\mu_1 = 3.0, \pi_1 = 0.3, n_1 = 300$}                                                                                                                      \\ \hline
BIAS & -24.0               & -6.4                & -6.4                & -0.3   & -2.7                & -3.5  & 15.3 & -54.8              & -6.8                & -37.7 & 2.6   \\
SD   & 13.6                & 15.1                & 15.1                & 26.2   & 16.7                & 31.3  & 25.1 & 13.7               & 15.9                & 10.1  & 77.3  \\
RMSE & 27.5                & {\ul \textbf{16.3}} & {\ul \textbf{16.3}} & 26.2   & {\ul \textbf{17.0}} & 31.5  & 29.3 & 56.5               & 17.3                & 39.0  & 77.30
\end{tabular}
}
\caption{Bias, standard deviation, and the RMSE of the estimated number of the false null hypotheses ($n\times \hat{\pi}_1$). The model is Gaussian, as described at the beginning of Section \ref{sec:simulations}. Simulations are conducted for various values of $\pi_1$ and $\mu_1$, with a sample size of $n=1000$ and 1000 repetitions. Bold and underlined values correspond to the two smallest RMSE in the row.}
\label{tab:dos_sim_n1000}
\end{table}

% Please add the following required packages to your document preamble:

% Please add the following required packages to your document preamble:
% \usepackage[normalem]{ulem}
% \useunder{\uline}{\ul}{}
\begin{table}[]
{\footnotesize
\begin{tabular}{lrrrrrrrrrr}
     & $\hat{\pi}_1^1$                & $\hat{\pi}_1^{1/2}$               & $\hat{\pi}_1^a$               & ST-1/2 & ST-MED              & JD    & LLF   & LSL                 & MGF                 & MR     \\ \hline
\multicolumn{11}{c}{$\mu_1= 3 , \pi_1= 0.1 , n= 50      $}                                                                                                                 \\ \hline
BIAS & 0.90                & 2.50                & 1.10                & 0.80   & 0.70                & 0.90  & 3.80  & -2.00               & 0.00                & -3.20  \\
SD   & 2.70                & 3.10                & 2.90                & 5.60   & 4.50                & 5.20  & 5.50  & 1.90                & 3.60                & 1.50   \\
RMSE & {\ul \textbf{2.90}} & 4.00                & 3.10                & 5.70   & 4.60                & 5.20  & 6.70  & {\ul \textbf{2.80}} & 3.60                & 3.60   \\ \hline
\multicolumn{11}{c}{$\mu_1= 2 , \pi_1= 0.2 , n= 50      $}                                                                                                                 \\ \hline
BIAS & -0.90               & 0.60                & -0.40               & -0.30  & -0.60               & -0.60 & 3.10  & -4.80               & -1.50               & -6.50  \\
SD   & 3.60                & 3.30                & 3.90                & 6.00   & 4.50                & 5.60  & 6.20  & 2.70                & 3.70                & 2.20   \\
RMSE & {\ul \textbf{3.70}} & {\ul \textbf{3.30}} & 3.90                & 6.00   & 4.60                & 5.60  & 7.00  & 5.50                & 4.00                & 6.90   \\ \hline
\multicolumn{11}{c}{$\mu_1= 2 , \pi_1= 0.4 , n= 50      $}                                                                                                                 \\ \hline
BIAS & -3.20               & -2.50               & -2.50               & -0.80  & -2.60               & -1.60 & 2.60  & -6.60               & -2.90               & -9.50  \\
SD   & 2.60                & 2.30                & 2.30                & 5.60   & 2.80                & 5.50  & 6.40  & 3.80                & 3.40                & 2.60   \\
RMSE & 4.20                & {\ul \textbf{3.40}} & {\ul \textbf{3.40}} & 5.60   & {\ul \textbf{3.80}} & 5.80  & 6.90  & 7.70                & 4.50                & 9.90   \\ \hline
\multicolumn{11}{c}{$\mu_1= 3 , \pi_1= 0.05 , n= 100      $}                                                                                                               \\ \hline
BIAS & 0.80                & 3.80                & 0.90                & 1.80   & 1.60                & 1.70  & 5.40  & -2.20               & 0.40                & -3.30  \\
SD   & 3.20                & 4.80                & 3.40                & 7.40   & 6.40                & 6.80  & 7.30  & 1.90                & 4.70                & 1.80   \\
RMSE & {\ul \textbf{3.30}} & 6.10                & 3.60                & 7.60   & 6.60                & 7.00  & 9.10  & {\ul \textbf{3.00}} & 4.70                & 3.80   \\ \hline
\multicolumn{11}{c}{$\mu_1= 3 , \pi_1= 0.1 , n= 100      $}                                                                                                                \\ \hline
BIAS & 0.60                & 3.40                & 1.30                & 0.50   & 0.60                & 0.50  & 5.10  & -2.80               & -0.20               & -4.30  \\
SD   & 4.00                & 4.70                & 4.70                & 8.30   & 7.10                & 7.70  & 7.30  & 2.40                & 5.20                & 2.10   \\
RMSE & {\ul \textbf{4.00}} & 5.80                & 4.80                & 8.30   & 7.10                & 7.70  & 8.90  & {\ul \textbf{3.70}} & 5.20                & 4.80   \\ \hline
\multicolumn{11}{c}{$\mu_1= 2 , \pi_1= 0.2 , n= 100      $}                                                                                                                \\ \hline
BIAS & -3.00               & 0.30                & -0.60               & -0.90  & -1.50               & -1.50 & 4.20  & -9.40               & -2.90               & -11.00 \\
SD   & 5.90                & 5.10                & 6.20                & 8.90   & 6.70                & 8.60  & 9.10  & 3.90                & 5.20                & 3.50   \\
RMSE & 6.60                & {\ul \textbf{5.10}} & 6.20                & 9.00   & 6.90                & 8.70  & 10.00 & 10.20               & {\ul \textbf{6.00}} & 11.50  \\ \hline
\multicolumn{11}{c}{$\mu_1= 2 , \pi_1= 0.4 , n= 100      $}                                                                                                                \\ \hline
BIAS & -6.50               & -5.20               & -5.20               & -1.90  & -5.30               & -2.80 & 3.10  & -14.00              & -5.90               & -16.20 \\
SD   & 4.30                & 3.50                & 3.50                & 8.00   & 3.90                & 8.10  & 9.30  & 5.50                & 4.80                & 4.10   \\
RMSE & 7.80                & {\ul \textbf{6.20}} & {\ul \textbf{6.20}} & 8.20   & {\ul \textbf{6.60}} & 8.60  & 9.70  & 15.10               & 7.60                & 16.70 
\end{tabular}
}
\caption{Bias, standard deviation, and the RMSE of the estimated number of the false null hypotheses ($n\times \hat{\pi}_1$). The model is Gaussian, as described at the beginning of Section \ref{sec:simulations}. Simulations are conducted for various values of $n, \pi_1$ and $\mu_1$, and 1000 repetitions. Bold and underlined values correspond to the two smallest RMSE in the row.}
\label{tab:small_sample_dos}
\end{table}

\section{Real Data Example}
\label{sec:realdata}

Copy Number Variations (CNVs) are genetic alterations characterized by changes in the number of copies of specific DNA segments within an individual's genome, including duplications or deletions of these segments.  These variations play a crucial role in cancer development and progression, making their detection important for understanding the genetic causes of the disease. CNV data is typically obtained through aCGH (array comparative genomic hybridization), resulting in \textit{log ratio data}. In this data, no variation corresponds to a value of 0, while deletions are represented as decreases in value, and duplications as increases in value.

\begin{figure}[H]
    \centering
    \includegraphics[width=\textwidth]{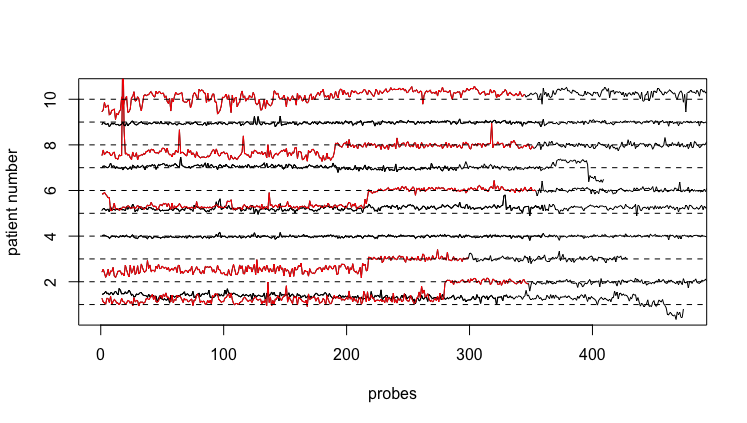}
    \caption{Log ratio data for chromosome 1, for the first ten patients. }
    \label{fig:cnv_patients}
\end{figure}

We analyze the dataset from the R package \code{neuroblastoma}, which includes annotated log ratio data for 575 patients with neuroblastoma. This data is initially considered in \cite{Hocking2013}. The dataset covers six chromosomes: 1, 2, 3, 4, 11, and 17. Within each chromosome, an interval of interest is examined. For each chromosome, an expert annotates each patient's data as either ``breakpoint'' or ``normal'' to indicate the suspected presence of changes. 
CNV data for the first chromosome for the first ten patients is shown in Figure \ref{fig:cnv_patients}. The probe locations are not aligned across patients, and the $x$-axis shows the index of the probe for which we have available data. Deletions can be seen in patients 2, 3, 6, and 8. We refer to these jumps in the piecewise constant mean of the sequences as breakpoints, not to be confused with the notion of change-points in the quantile function of the $p$-values used in the paper so far.

We explore the application of our method in the context of breakpoint inference. Our objective is to estimate the prevalence of copy number alterations (breakpoints) for each chromosome, for which we propose to use the DOS-Storey approach. Copy number log ratio data is usually assumed to be independent and Gaussian, with a piecewise constant underlying mean \citep{Zhang2010, jeng2012simultaneous}. However, the available data contains some outliers (see patient 8 in Figure \ref{fig:cnv_patients}), and for that reason before the analysis we trimmed the data by excluding the data points in the lower or upper 2.5th quantile. 
For each patient and each chromosome, a $p$-value arising from testing whether the sequence contains a breakpoint is obtained using the method by \cite{Jewell2022}. As a result we get a $p$-value for each patient in each of the six genomic regions. All $p$-values are shown in Figure \ref{fig:realdata_pvals}.
Note that the black bars correspond to the patients that are annotated as ``breakpoint'' by the expert, however the ground truth of whether the breakpoint is present is unknown. The critical component of this analysis is the computation of $p$-values using the method proposed in \cite{Jewell2022}. The estimation and inference on the breakpoints is performed using package \code{ChangepointsInference} \citep{pack_cpinference}, which enables estimation of the $p$-values using a post-selection inference approach.
A single breakpoint is estimated in each sequence using the CUSUM statistic. The fixed window parameter for testing the estimated breakpoint is set to $h = 5$. 

Using the three DOS-Storey estimators $\hat{\pi}_1^1, \hat{\pi}_1^{1/2}$ and $ \hat{\pi}_1^a$, for each chromosome we estimate the number of patients with breakpoints and compare it to the number of patients annotated as having a breakpoint. We compare the estimated values to those obtained by Storey's method with $\lambda = 0.5$ (ST-1/2). The results showing the estimated number of affected patients for each chromosome are presented in Table \ref{tab:realdata}. The values in the rightmost column (ANNOT) are the reported numbers of patients annotated as having a breakpoint. Although the ground truth is unknown, we observe that all three DOS method produce estimates that are closer to the annotated values. Particularly, $\hat{\pi}_1^1$ gives the best result for Chromosome 3, where the annotations indicate a small number of affected patients.

\begin{figure}[h]
\begin{tabular}{ccc}
\subfloat{\includegraphics[width=.32\textwidth]{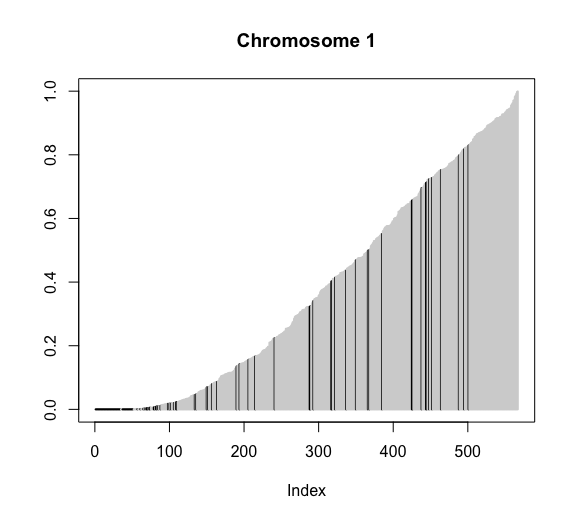}} &
\subfloat{\includegraphics[width=.32\textwidth]{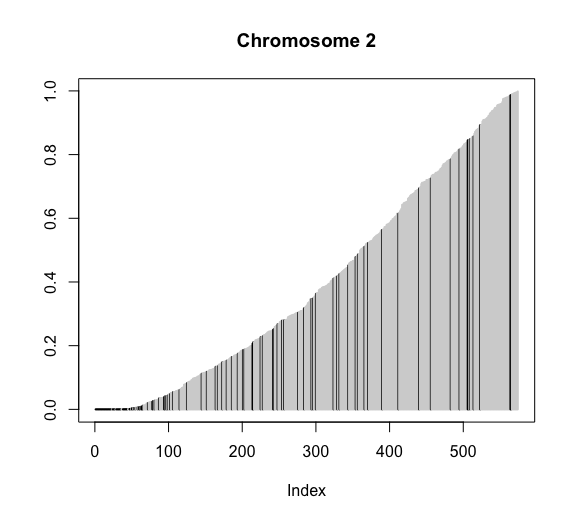}} &
\subfloat{\includegraphics[width=.32\textwidth]{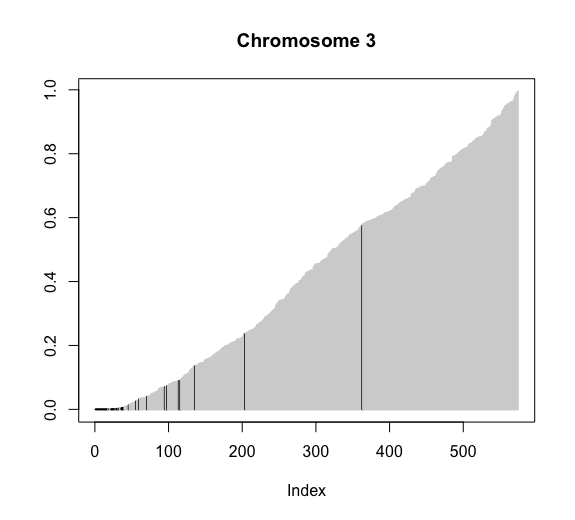}} \\
\subfloat{\includegraphics[width=.32\textwidth]{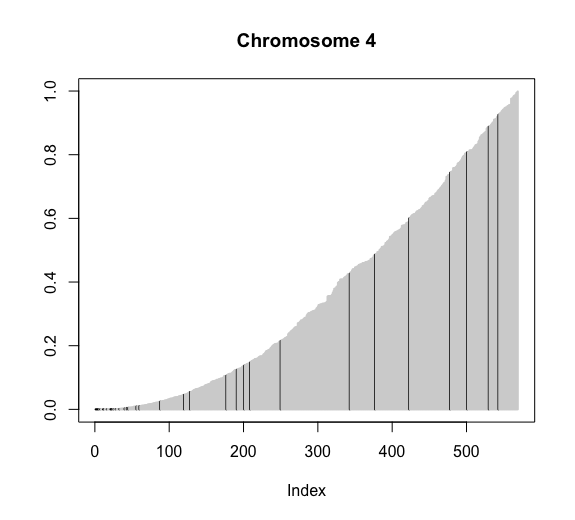}}  &
\subfloat{\includegraphics[width=.32\textwidth]{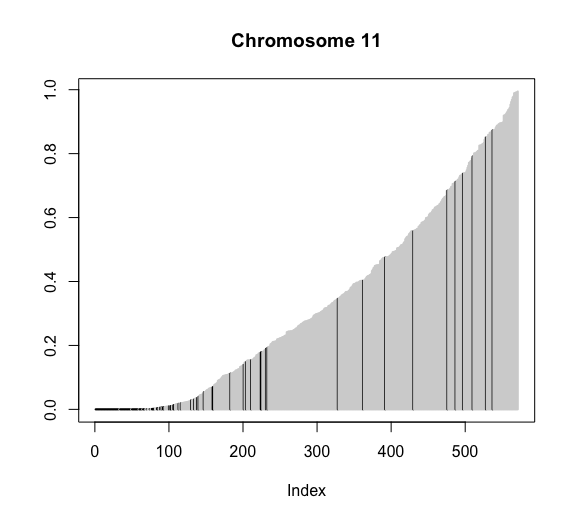}} &
\subfloat{\includegraphics[width=.32\textwidth]{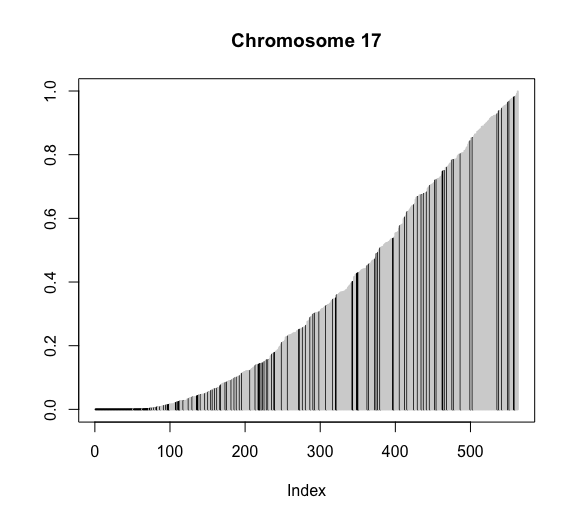}} 
\end{tabular}
\caption{$p$-value plots for the neuroblastoma data. The black bars denote the $p$-values corresponding to the sequences that are annotated as ``breakpoint'' (with change-point in the copy number data), while the gray background corresponds to the sequences annotated as ``normal'' (no change in the copy number values).}
\label{fig:realdata_pvals}
\end{figure}       

\begin{singlespacing}
\begin{table}[H]
\centering
\begin{tabular}{c|ccccc}
       & $\hat{\pi}_1^1$ & $\hat{\pi}_1^{1/2}$ & $\hat{\pi}_1^{a}$ & ST-1/2 & ANNOT \\ \hline
Chr 1  & 150 & 149 & 149     & 167   & 103   \\ \hline
Chr 2  & 147 & 147 & 147   & 146   & 110   \\ \hline
Chr 3  & 39   & 84 & 84   & 70    & 43    \\ \hline
Chr 4  & 168 & 168 & 168 & 195   & 35    \\ \hline
Chr 11 & 174  & 174 & 174  & 243   & 107   \\ \hline
Chr 17 & 171  & 171 & 171  & 193   & 175  
\end{tabular}
\caption{The number of affected sequences for each chromosome, estimated using the three DOS-Storey estimators  and the ST-1/2 estimator, along with the number of sequences annotated as ``breakpoint'' (ANNOT).}
\label{tab:realdata}
\end{table}
\end{singlespacing}

%This example motivates using multiple testing methods for estimating the subset of profiles with breakpoints in high-dimensional breakpoint problems. There is limited literature on this topic \citep{ ChenL2022, jirak2015uniform, Chen2023}. The existing literature typically assumes that the jumps are large enough to support theoretical statements and enable accurate estimation of the subset of affected coordinates. However, this assumption may not hold for smaller jumps, when coordinates with breakpoints might be indistinguishable from those without breakpoints. By estimating the proportion of affected profiles, our method assesses the sparsity of the high-dimensional breakpoint problem. Moreover, if selecting the subset of affected coordinates is of interest, this can be achieved using the adaptive Benjamini-Hochberg procedure, which controls the number of false discoveries.

\begin{singlespacing}
    
\bibliographystyle{jabes}
\bibliography{main.bib}
\end{singlespacing}

\end{document}